\theoremstyle{plain}
\newtheorem{theorem}{Theorem}[section]
\newtheorem{lemma}[theorem]{Lemma}
\newtheorem{corollary}[theorem]{Corollary}
\theoremstyle{definition}
\newtheorem{definition}{Definition}
\theoremstyle{remark}
\newtheorem{remark}{Remark}
\def\BibTeX{{\rm B\kern-.05em{\sc i\kern-.025em b}\kern-.08em
    T\kern-.1667em\lower.7ex\hbox{E}\kern-.125emX}}
\begin{document}
\title{\vspace{6.65mm}Distributed Estimation by Two Agents with Different Feature Spaces}
\author{Aneesh Raghavan and Karl Henrik Johansson
\thanks{*Research supported by the Swedish Research Council (VR), Swedish Foundation for Strategic Research (SSF),  and the Knut and Alice Wallenberg Foundation. The authors are with the Division of Decision and Control Systems, 
Royal Institute of Technology,  KTH,  Stockholm. 
Email: {\tt\small aneesh@kth.se,  kallej@kth.se}}%
}
\maketitle
\begin{abstract}
We consider the problem of estimation of a function by a system consisting of two agents and a fusion center. The two agents collect data comprising of samples of an independent variable and the corresponding value of a dependent variable. The objective of the system is to collaboratively estimate the function without any exchange of data among the members of the system. To this end,  we propose the following framework. The agents are given a set of features using which they construct suitable function spaces to formulate and solve the estimation problems locally. The estimated functions are uploaded to a fusion space where an optimization problem is solved to fuse the estimates (also known as meta-learning) to obtain the system estimate of the mapping. The fused function is then downloaded by the agents to gather knowledge about the other agents estimate of the function. With respect to the framework, we present the following: a systematic construction of fusion space given the features of the agents; the derivation of an uploading operator for the agents to upload their estimated functions to a fusion space; the derivation of a downloading operator for the fused function to be downloaded. Through an example on least squares regression, we illustrate the distributed estimation architecture that has been developed. 
\end{abstract}
\section{Introduction}\label{section 1}
\subsection{Motivation}
In statistical inference problems, observations about a given phenomenon can be obtained using different kind or multiple sensors. The observations are often collated to obtain a single data set. Each sensor output is considered as a \textit{modality} associated with the data set. When the data set is used to estimate a mapping from a set of independent variables to a set of dependent variables, the resulting algorithms are known as \textit{multimodal learning algorithms}.\cite{lahat2015multimodal} provides an overview on to multimodal data fusion focusing on why it is needed and how it can be achieved. \cite{ramachandram2017deep} is a survey paper on deep multimodal learning covering many aspects including comparison with conventional multimodal learning, fusion structures, and applications. Multi-modal learning has found applications in many areas including human activity recognition \cite{ofli2013berkeley} \cite{chen2015utd}, autonomous driving \cite{geiger2013vision} \cite{maddern20171}, and, health monitoring \cite{banos2015design}.      

Kernel methods have played a central role in inference problems. For classical literature on the application of kernel methods to estimation, we refer to \cite{berlinet2011reproducing}, \cite{steinwart2008support}, and,  the references there in. In \cite{pillonetto2014kernel}, \cite{chen2014system}, and, \cite{espinoza2005kernel}, identification of discrete time and continuous time systems using these methods have been reviewed and studied. In more recent times, multimodal kernel learning methods have been studied and algorithms referred to as ``Multiple Kernel Learning" \cite{gonen2011multiple}, have been developed and applied to problems in object recognition \cite{bucak2013multiple}, disease detection \cite{donini2016multimodal}, etc. Kernel methods for deep learning has been studied in \cite{cho2009kernel}. Understanding deep learning through comparison with kernel based learning has been done in \cite{belkin2018understand}. Hence, kernel methods which have been extensively used in classical inference problems, have evolved, and are relevant in contemporary inference problems as well.

Multimodal learning problems are usually studied through data collation, i.e., it is a centralized learning approach. The objective of this paper is to take a step towards achieving multi-modal learning through a distributed approach. One such scheme that has been studied in the literature in the context of IoTs, etc., is vertical federated learning, \cite{kairouz2021advances}, \cite{li2021survey}, when the number of agents is large. However, distributed schemes with fewer number of agents and emphasis on the learning space itself has not received much attention. We note that in our previous work, \cite{raghavan2023distributed}, we considered a distributed regression problem with \textit{noisy} data by two agents and a fusion center with the agents learning in the same space. 

For motivation, consider the following example. There are two agents (robots) in a given region trying to understand their environment. Each agent is restricted to survey a part of the space that they live in. Each agent is aware of the predominant features of the environment in the region that it surveys. By collecting data, given the features, the agent is able to estimate certain aspects of its environment, though that may not be the complete picture. By fusing the knowledge that they have learnt they could arrive at a complete knowledge of their environment. Given the fused knowledge, they should be able to interpret that in their own knowledge space, i.e., they should be able to express it using their own feature vectors to reflect upon the other agents perception of the environment. It could enhance their perception of the environment. 
\subsection{Problem Considered}
Though we do not formally define data, information, knowledge in the context of inference problems, we differentiate between them as following. A set of measurable outcomes associated with an observable phenomenon or an experiment is referred to as \textit{data}. Structured data, that is, data which could  used to infer models or hidden patterns is referred to as \textit{information}. The inferred models or patterns are referred to as knowledge. In the context of the experiment, the set of all possible models or patterns is referred to as the \textit{knowledge space} (KS). As information is received sequentially, the knowledge about the observed phenomenon evolves in the KS. In the simplest setting, we can consider the estimation of a mapping from an independent variable (input) to a dependent variable (output). Information would correspond to pairs of input, output measurements. Knowledge corresponds to the function from the input to the output, while the function space where the inference problem is studied is interpreted as a knowledge space. Other examples of knowledge spaces include set of probability measures on a Borel $\sigma$ algebra, probability measures on a orthoposet \cite{baras2021order}. 

The problem considered is as follows. We consider two agents and a fusion center. Each agent receives information comprising of samples of an independent variable and corresponding values of the dependent variable. The underlying phenomenon generating information for both the agents is the same. Given the information, the objective of the agents and the fusion center is to collaboratively estimate the mapping from the independent variable to the dependent variable \textit{without} exchanging any information between the agents or the agents and the fusion center. Along with the samples, each agent is provided a set of features predominant in the information received by them. 
\vspace{-0.4cm}
\subsection{Contributions}
We propose the following architecture, refer Figure \ref{Figure 1}: Given local information, each agent estimates the mapping from the input to the output  by solving a least squares regression problem in its local KS. The functions estimated by both agents are uploaded to the fusion center. At the fusion center, a fusion problem as an optimization problem is formulated and solved to fuse the functions estimated by the agents. The fused function is considered as the function estimated by the system. The fused function is downloaded on to the local KSs by the agents and is considered as the final estimates of the agents. 
\begin{figure}
\begin{center}
\includegraphics[width=\columnwidth]{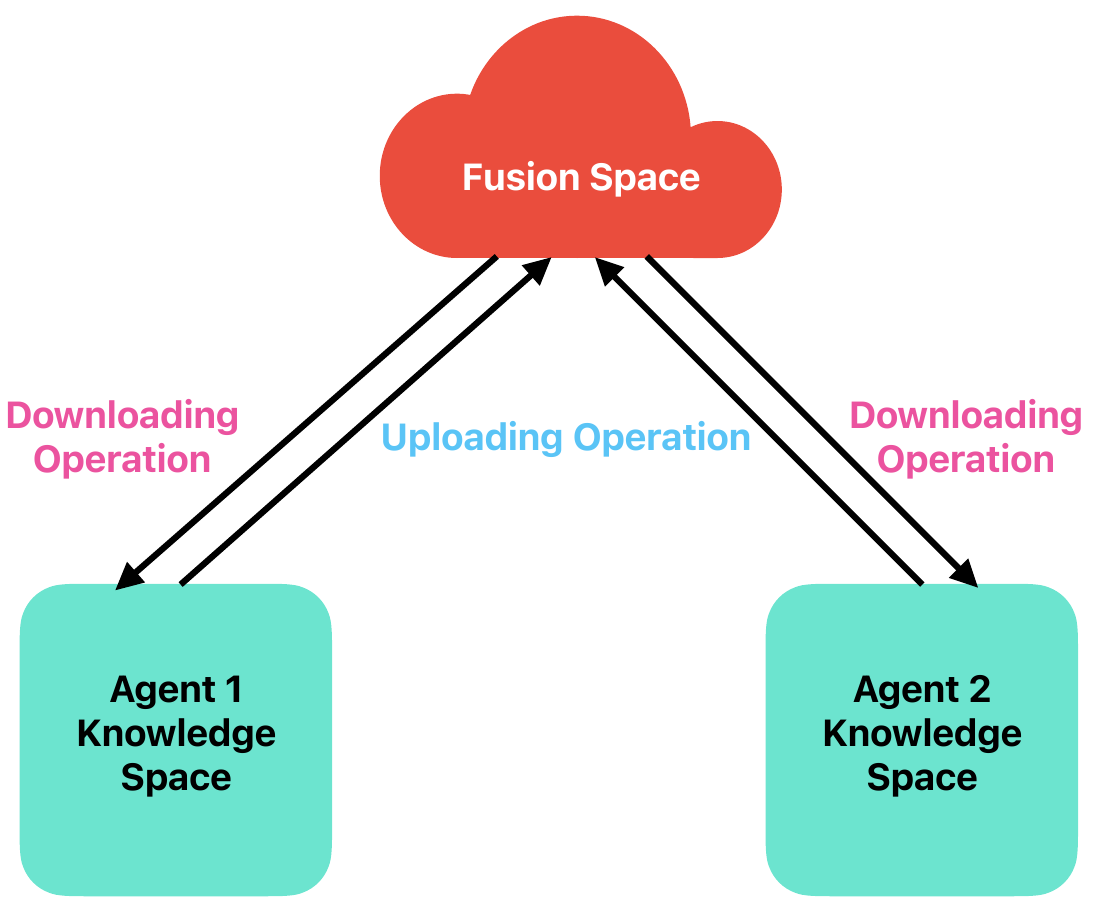}
\caption{Schematic for Distributed Estimation Architecture} 
\label{Figure 1}
\end{center}
\vspace{-1cm}
\end{figure}  

With respect to the above frame work we prove the following: Given the features, we present the construction of the individual KSs of the agents as a reproducing kernel Hilbert space (RKHS). The fusion space is defined as the set of functions obtained through linear combination of functions in the the local KSs. We prove that the fusion space is also an RKHS whose kernel is the sum of the kernels of the agents. As a corollary, we obtain that the uploading operator used by the agents to upload functions from the local KS to the fusion space is a linear bounded operator. All functions in the fusion center might not be decipherable in the local KSs, a download operator is needed to suitably transform the fused function which can be interpreted in the local KS.  We present a detailed construction of such an download operator and prove that it is linear and bounded. To illustrate the  distributed estimation scheme, we present a numerical example. 
\subsection{Novelty and Interpretation of Proposed Solution}
The novelty of the proposed solution is that distributed learning can be achieved with different agents learning in different function spaces. The framework is applicable to other learning problems including classification and density estimation. The function spaces can be chosen based on the predominant features in the data collected by them. We focus on the construction of the learning spaces and how functions can be transferred from one space to another and less on the estimation procedure itself, i.e., the focus of the paper is on framework and the structure for distributed learning, less on the learning algorithm itself. To emphasize the same, the example presented here demonstrates how a least squares estimation problem over polynomials and exponentials can be solved in distributed way; one agent focusing on polynomials and other focusing on exponentials. 

In kernel methods, the kernels used for learning are usually assumed to be known. Depending on the data collected by an agent, there could be a predominant feature in the data. This could dictate the kernel used by the agent for estimation. By using different kernels the agents are tuning their estimation procedure to the specific data collected by them. Further by using different kernels they could be exhausting the set of kernels that could be used for the given data set, if one is not able to make a clear choice upfront. Thus, from a practical standpoint the system could benefit when the agents chose different kernels. 

There is an additional ``benefit" of privacy as, the proposed framework promotes local processing of data, thus keeping the data private to the agent. An interpretation for distributed learning is ``robust" learning. Though this to be formally proven, different agents estimating the function using different kernels ensures that the estimation scheme is robust to noise in data (as same noise need not be observed by the agents) and to the kernels chosen. If all the agents where to chose the same kernel, there could be a potential bias in the estimation procedure. By choosing different kernels, it is ensured that  different features of the data is given importance by different agents.

The paper is organized as follows. It is different from the order in which the contributions were discussed. In Section \ref{section 2}, we discuss the construction of the individual KSs, the fusion space, the derivation of the uploading and downloading operator. In Section \ref{section 3}, we discuss the regression problem for the agents, its solution, and, the fusion problem in fusion space. In Section \ref{section 4}, we present a numerical example demonstrating the distributed estimation methodology. In Section \ref{section 5}, we summarize the key aspects of this paper and focus on future work. Notation: we use superscript for the agent, subscript for samples and summation indicies. We represent vectors obtained by concatenating smaller vectors in boldface. For a function $f\in V$, $V$ vector space, we use the notation $f$ when it is treated as a vector and the notation $f(\cdot)$ when it is treated as a function. The projection onto a subspace $\mathcal{M}$ of a Hilbert space $H$ is denoted by  $\Pi_{\mathcal{M}}$. 
\section{Construction of Knowledge Spaces}\label{section 2}
In this section, we discuss the construction of KSs for the individual agents and the fusion space. Let $\mathcal{X} \subset \mathbb{R}^d$. Let $(H, \langle \cdot, \cdot \rangle_{H})$ be a Hilbert space of functions, $f: \mathcal{X} \to \mathbb{R}$. Let $K: \mathcal{X} \times \mathcal{X} \to \mathbb{R}$ be a function, $n \in \mathbb{N}$, $\{x_1, \ldots, x_n \} \subset \mathcal{X}$, and, $\mathbf{K}:=(K(x_{i}, x_{j}))_{ij}$ be Gram (kernel) matrix of $K$ with respect to $x_1, \ldots, x_n$.
\begin{definition}\label{defitnion 1}
The function $K(\cdot,\cdot)$ is said to be a positive definite kernel if the gram matrix generated by the function is a positive definite matrix for all $n$ and $\{x_1, \ldots, x_n \} \subset \mathcal{X}$. 
\end{definition}
\begin{definition}\label{definition 2}
$(H, \langle \cdot, \cdot \rangle_{H})$ is said to be a reproducing kernel Hilbert space (RKHS) with a positive definite kernel $K$, if, 
\begin{itemize}
\item $K(\cdot,x) \in H, \; \forall x\in \mathcal{X},$
\item the reproducing property is satisfied
\begin{align*}
f(y) = \langle f(\cdot), K(\cdot,y) \rangle_{H}, f\in H, y \in \mathcal{X}.
\end{align*}
\end{itemize}
\end{definition}
\subsection{Construction of Individual Knowledge Spaces for the Agents}
The set of features for agent $i$ is the set of functions, $\{\varphi^{i}_{j}(\cdot)\}_{j \in \mathcal{I}^i}$, where $\varphi^{i}_{j} : \mathcal{X} \to \mathbb{R}$ and $| \mathcal{I}^i | < \infty$. The knowledge space for agent $i$ is the finite dimensional vector space, $H^i$, defined as:
\begin{align*}
H^{i} = \{f : f(\cdot) = \sum_{j \in \mathcal{I}^i}\alpha_j\varphi^i_j(\cdot), \{\alpha_j\}_{j \in \mathcal{I}^i} \subset \mathbb{R}\}.
\end{align*}
The null vector for the space $H^i$ is the function $\theta^i(\cdot)$ defined as $\theta^i(x) = 0, \forall x \in \mathcal{X}$. For agent $i$, the features are assumed to be linearly independent, i.e., $\sum_{j \in  \mathcal{I}^i} \alpha_{j} \varphi^{i}_{j}= \theta^i$ if and only if $\alpha_j =0,\;  \forall j$. The function space, $H^i$, is equipped with the inner product, $\langle \cdot, \cdot \rangle_{H^i}: H^{i} \times H^{i} \to \mathbb{R}$, defined as follows. For, $f(\cdot) =  \sum_{j \in \mathcal{I}^i}\alpha_j\varphi^i_j(\cdot), g(\cdot)  = \sum_{j \in \mathcal{I}^i}\beta_j\varphi^i_j(\cdot)$,
\begin{align*}
\langle f(\cdot), g(\cdot) \rangle_{H^i}  \hspace{-2pt} = \hspace{-2pt} \langle \sum_{j \in \mathcal{I}^i}\alpha_j\varphi^i_j(\cdot) , \sum_{j \in \mathcal{I}^i}\beta_j\varphi^i_j(\cdot)  \rangle_{H^i} := \sum_{j \in \mathcal{I}^i}\alpha_j\beta_j.
\end{align*} 
It can be verified that the above definition of inner product on $H^i$ satisfies the axioms of a inner product on a vector space. The norm induced by the inner product is $|| f ||^2_{H^i} = \sum_{j \in \mathcal{I}^i}\alpha_j^2$. The kernel $K^i: \mathcal{X} \times \mathcal{X} \to \mathbb{R}$ is defined as, 
\begin{align*}
K^{i}(x,y) = \sum_{j \in \mathcal{I}^i}\varphi^{i}_{j}(x)\varphi^{i}_{j}(y).
\end{align*}
Let $\{x_{1}, \ldots, x_{n}\}\subset \mathcal{X}$. Let $\mathbf{K^i}:=(K^i(x_{k}, x_{l}))_{kl}$, be the Gram matrix of $K^i(\cdot, \cdot)$ with respect to $\{x_1, \ldots, x_n\}$. Then for any $\boldsymbol{\alpha} \in \mathbb{R}^n$, $\boldsymbol{\alpha}^{T}\mathbf{K^i}\boldsymbol{\alpha}=$
\begin{align*}
&\sum^{n}_{k=1}\sum^{n}_{l=1} \alpha_k\alpha_l K^i(x_k,x_l) = \sum^{n}_{k=1}\sum^{n}_{l=1}\alpha_k \alpha_l\sum_{j \in \mathcal{I}^i}\varphi^{i}_{j}(x_k)\varphi^{i}_{j}(x_l)\\
&= \sum_{j \in \mathcal{I}^i} \Big( \sum^{n}_{k =1}\alpha_k \varphi^{i}_{j}(x_k) \Big) \Big( \sum^{n}_{l =1}\alpha_l \varphi^{i}_{j}(x_l) \Big) = || f ||^2_{H^i} \geq 0,
\end{align*}
where $f(\cdot) = \sum_{j \in \mathcal{I}^i} \Big(\sum^{n}_{k =1}\alpha_k \varphi^{i}_{j}(x_k)\Big) \varphi^{i}_{j}(\cdot)$. The Gram matrix of $K^{i}(\cdot, \cdot)$ is positive definite for any  $\{x_{1}, \ldots, x_{n}\}\subset \mathcal{X}$, for all $n \in \mathbb{N}$. Thus, $K^i(\cdot,\cdot)$ is a positive definite kernel (From Definition 1, also refer \cite{hofmann2008kernel}). We note that, $K^{i}(\cdot, y) = \sum_{j \in \mathcal{I}^i}\varphi^{i}_{j}(y)\varphi^{i}_{j}(\cdot) \in H^{i}$, with $\alpha^{i}_{j} =\varphi^{i}_{j}(y),\; \forall j \in\mathcal{I}^i$. For $f \in H^i$, $f(\cdot) =   \sum_{j \in \mathcal{I}^i}\alpha_j\varphi^i_j(\cdot)$,
\begin{align*}
\langle f(\cdot), K^i(\cdot,y) \rangle_{H^{i}} &= \langle \sum_{j \in \mathcal{I}^i}\alpha_j\varphi^i_j(\cdot), \sum_{j \in \mathcal{I}^i}\varphi^{i}_{j}(y)\varphi^{i}_{j}(\cdot)  \rangle_{H^{i}}\\
&= \sum_{j \in \mathcal{I}^i}\alpha_j \varphi^{i}_{j}(y) = f(y).
\end{align*}
The reproducing property is satisfied by $f\in H^i$ with kernel $K^{i}(\cdot, \cdot)$. From Definition 2, it follows that $H^{i}$ is a RKHS with kernel $K^{i}(\cdot, \cdot)$.
\subsection{Construction of the Fusion Space}
The motivation for the construction of a fusion space is to build a function space where algebraic operations can be simultaneously performed on functions living in the local KSs of the agents. Hence, at the very least it should include functions from the local KSs. As a possible candidate, one could consider $H^{1} \cup H^{2}$. The properties of the original KSs could be lost as the resulting space need not be vector space or a RKHS. The space $\big(H^{1} \oplus H^{2}\Big) / \sim$ seems to be another reasonable candidate, however a set of equivalent classes of functions cannot be a RKHS. From the assumption, the vectors in $\{\varphi^{1}_{j}(\cdot)\}_{j \in \mathcal{I}^1}$ are linearly independent, and so are the vectors in $\{\varphi^{2}_{j}(\cdot)\}_{j \in \mathcal{I}^2}$. However, it is not necessary that the vectors in $\{\varphi^{1}_{j}(\cdot)\}_{j \in \mathcal{I}^1} \cup \{\varphi^{2}_{j}(\cdot)\}_{j \in \mathcal{I}^2} $ are linearly independent. Since $K^{1}$ and $K^{2}$ are positive definite kernels, $K = K^{1} + K^{2}$ is also a positive definite kernel. This suggests that the fusion space could be constructed as the RKHS corresponding to the kernel $K(\cdot,\cdot)$ by considering the linear combination of all functions in $H^{1}$ and $H^{2}$.
\begin{definition}
The fusion space, $H$, is defined as $H = \{f: f= f^{1} + f^{2}, f^{1} \in H^{1}, f^{2} \in H^{2}\}$.
\end{definition}
In the following theorem, we characterize the fusion space as a RKHS after associating a suitable inner product and find an expression for the norm induced by the inner product. 
\begin{theorem}\label{Theorem 1}
If $K^{i}(\cdot,\cdot)$ is the reproducing kernel of Hilbert space $H^{i}$,  with norm $||\cdot||_{H^i}$, then $K(x,y)=K^{1}(x,y) + K^{2}(x,y)$ is the reproducing kernel of the space $H = \{f: f= f^1 + f^2 | f^{i} \in H^{i}\}$ with the norm:
\begin{align*}
||f||^2_{H} = \underset{\substack{f^1 + f^2 = f,\\ f^{i} \in H^{i}} }  \min \;\;  ||f^{1}||^2_{H^1} + ||f^{2}||^2_{H^2}.
\end{align*} 
\end{theorem}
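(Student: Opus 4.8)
The plan is to realize $H$ as an orthogonal complement inside the external direct sum of the two local spaces and transport the Hilbert-space structure from there. First I would form $E := H^{1} \oplus H^{2}$ with $\langle (f^{1},f^{2}),(g^{1},g^{2})\rangle_{E} := \langle f^{1},g^{1}\rangle_{H^{1}} + \langle f^{2},g^{2}\rangle_{H^{2}}$; since $|\mathcal{I}^{1}|, |\mathcal{I}^{2}| < \infty$, $E$ is a finite-dimensional Hilbert space. Next I would introduce the linear addition map $L : E \to H$, $L(f^{1},f^{2}) = f^{1} + f^{2}$, which is onto by the definition of $H$, and set $N := \ker L = \{(f^{1},f^{2}) \in E : f^{1}(x)+f^{2}(x) = 0 \ \forall x \in \mathcal{X}\}$. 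As $E$ is finite dimensional, $N$ is closed and $E = N \oplus N^{\perp}$, and the restriction $L|_{N^{\perp}} : N^{\perp} \to H$ is a vector-space isomorphism (injective because $N \cap N^{\perp} = \{0\}$, onto because $L$ is onto and $L(N) = \{0\}$). I would then \emph{define} the inner product on $H$ by pulling it back along this isomorphism: $\langle f,g\rangle_{H} := \langle (L|_{N^{\perp}})^{-1}f,\ (L|_{N^{\perp}})^{-1}g\rangle_{E}$. The inner-product axioms and completeness are inherited from $E$, so $H$ is a Hilbert space of functions $\mathcal{X}\to\mathbb{R}$.

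Next I would derive the norm formula from the Pythagorean identity in $E$. Given $f \in H$, any preimage $(f^{1},f^{2})$ with $f^{1}+f^{2} = f$ splits orthogonally as $(f^{1},f^{2}) = p + q$ with $p \in N^{\perp}$ and $q \in N$, where $p = (L|_{N^{\perp}})^{-1}f$ is \emph{the same} for every such preimage; hence $\|f^{1}\|_{H^{1}}^{2} + \|f^{2}\|_{H^{2}}^{2} = \|p\|_{E}^{2} + \|q\|_{E}^{2} \ge \|p\|_{E}^{2} = \|f\|_{H}^{2}$, with equality exactly when $q = 0$. Since the minimizing preimage $p$ exists, the infimum is attained and equals $\|f\|_{H}^{2}$, which is precisely the asserted formula.

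Then I would verify the two conditions of Definition 2 for the candidate kernel $K = K^{1}+K^{2}$ (its positive definiteness having already been noted). Membership $K(\cdot,y) = K^{1}(\cdot,y) + K^{2}(\cdot,y) \in H$ is immediate from $K^{i}(\cdot,y) \in H^{i}$. The crux — and the only place the reproducing property of the $H^{i}$ is used — is to observe that $(K^{1}(\cdot,y),K^{2}(\cdot,y)) \in N^{\perp}$: for $(f^{1},f^{2}) \in N$, $\langle (f^{1},f^{2}),(K^{1}(\cdot,y),K^{2}(\cdot,y))\rangle_{E} = \langle f^{1},K^{1}(\cdot,y)\rangle_{H^{1}} + \langle f^{2},K^{2}(\cdot,y)\rangle_{H^{2}} = f^{1}(y)+f^{2}(y) = 0$. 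Therefore $(L|_{N^{\perp}})^{-1}K(\cdot,y) = (K^{1}(\cdot,y),K^{2}(\cdot,y))$, and for any $f \in H$ with some decomposition $f = f^{1}+f^{2}$ the pulled-back inner product gives $\langle f,K(\cdot,y)\rangle_{H} = \langle (f^{1},f^{2}),(K^{1}(\cdot,y),K^{2}(\cdot,y))\rangle_{E} = f^{1}(y)+f^{2}(y) = f(y)$. By Definition 2, $H$ is the RKHS with reproducing kernel $K$.

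I expect the main obstacle to be conceptual rather than computational: making precise that the inner product on $H$ is \emph{well defined}, i.e., independent of the decomposition $f = f^{1}+f^{2}$, and that the minimum in the norm formula is genuinely attained. Both are settled at once by the identification of $H$ with $N^{\perp}$ through $L$; with that in hand, the norm formula, the membership $K(\cdot,y)\in H$, and the reproducing property all reduce to the orthogonality relation $(K^{1}(\cdot,y),K^{2}(\cdot,y)) \perp N$ together with short computations.
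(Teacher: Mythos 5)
Your proposal is correct and follows essentially the same route as the paper: the product space $H^{1}\times H^{2}$ with the addition map $L$, the orthogonal splitting off of $\ker L$, the pulled-back inner product, the Pythagorean argument for the norm formula, and the orthogonality of $(K^{1}(\cdot,y),K^{2}(\cdot,y))$ to $\ker L$ for the reproducing property. The only cosmetic difference is that you establish that orthogonality abstractly from the reproducing property of each $H^{i}$, whereas the paper writes it out in feature coordinates; the content is identical.
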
  
\begin{proof}
Let $H_{\prod} = H^{1} \times H^2$ denote the product space, with inner product $\langle (f^{1},f^{2}),(g^{1},g^{2})\rangle_{H_{\prod}} = \langle f^{1}, g^{1}\rangle_{H^1} + \langle f^{2}, g^{2}\rangle_{H^2}$. Let $H =\{f : f= f^{1} + f^{2}, f^{i} \in H^{i}, i=1,2\}$. Clearly, $H$ is a vector space whose null vector we denote by $\theta$. Let $L: H_{\Pi}  \to H$ be a operator defined as $L((f^{1},f^{2})) =f^{1} +f^{2}$. $L$ is a linear operator and its null space, $\mathcal{N}(L) =\{(f^{1}, f^{2}) \in H_{\prod} : f^{1} +f^{2} =\theta \}$ is a closed subspace as it is finite dimensional. The basis vectors for $H_{\prod}$ are given by $\{\varphi^{1}_{j} \times \theta^2 \}_{j \in \mathcal{I}^{1}} \cup \{\theta^1 \times \varphi^{2}_{j} \}_{j \in \mathcal{I}^{2}}$. Any vector in $H_{\prod}$, can be expressed as $(f^1,f^2) = (\sum_{j \in \mathcal{I}^1}\alpha^1_j \varphi^{1}_{j}, \sum_{j \in \mathcal{I}^2}\alpha^2_j \varphi^{2}_{j})$ and thus $H_{\prod}$ is isomorphic to $\mathbb{R}^{\mathcal{I}^1 + \mathcal{I}^2}$. The null space $\mathcal{N}(L)$ is isomorphic to the subspace,
\begin{align*}
\mathcal{N}=\Big\{\Big(\boldsymbol{\alpha^1}, \boldsymbol{\alpha^2} &\Big) \in \mathbb{R}^{\mathcal{I}^1 + \mathcal{I}^2}: \sum_{j \in \mathcal{I}^1}\alpha^1_j \varphi^{1}_{j} + \sum_{j \in \mathcal{I}^2}\alpha^2_j \varphi^{2}_{j}= \theta,\\
&\boldsymbol{\alpha^1} = \Big(\alpha^{1}_{1}, \ldots \alpha^{1}_{\mathcal{I}^1}\Big), \boldsymbol{\alpha^2} = \Big(\alpha^{2}_{1}, \ldots \alpha^{2}_{\mathcal{I}^2}\Big) \Big\}.
\end{align*}
Since $\mathcal{N}(L)$ is a closed subspace, there exists a unique closed subspace $\mathcal{M}$ such that $H_{\prod} = \mathcal{M}\oplus \mathcal{N}(L)$. The mapping $L_{\mathcal{M}} = L \circ \Pi_{\mathcal{M}}$ (operator $L$ restricted to subspace $\mathcal{M}$) is bijection from $\mathcal{M}$ to $H$. For any function $f \in H$, let $L^{-1}_{\mathcal{M}}(f) = (L^{1}(f), L^{2}(f))$, i.e., $(L^{1}(f), L^{2}(f))$ is the \textit{unique} tuple of functions in $\mathcal{M}$ such that $L^{1}(f) \in H^{1}, \; L^{2}(f)\in H^{2}$ and $L^{1}(f) + L^{2}(f) =f$. We now define the inner product on $H$ as follows:
\begin{align*}
\langle f, g \rangle_{H} = \langle L^{1}(f), L^{1}(g) \rangle_{H^{1}} +  \langle L^{2}(f), L^{2}(g) \rangle_{H^{2}}.
\end{align*}
Since $K^{i}(\cdot, y) \in H^i$, it follows that $K(\cdot, y) = K^{1}(\cdot, y) + K^2(\cdot, y) \in H$. We claim that $L^{i}(K(\cdot, y))= K^{i}(\cdot, y)$. Indeed, since $K(\cdot, y) = \sum_{j \in \mathcal{I}^1} \varphi^{1}_{j}(y)\varphi^{1}_{j}(\cdot) + \sum_{j \in \mathcal{I}^2}\varphi^{2}_{j}(y) \varphi^{2}_{j}(\cdot)$, to prove the claim it suffices to prove that the vector $\Big( \varphi^{1}_{1}(y), \ldots, \varphi^{1}_{ \mathcal{I}^1}(y), \varphi^{2}_{1}(y), \ldots, \varphi^{2}_{ \mathcal{I}^2}(y) \Big)$ is orthogonal to $\mathcal{N}$ for all $y \in \mathcal{X}$. That is, $ \forall \; \Big(\boldsymbol{\alpha^1}, \boldsymbol{\alpha^2} \Big) \in \mathcal{N}$, 
\begin{align*}
\sum_{j \in \mathcal{I}^1}\alpha^1_j \varphi^{1}_{j}(y) + \sum_{j \in \mathcal{I}^2}\alpha^2_j \varphi^{2}_{j}(y)= 0,
\end{align*} 
which is true from the definition of $\mathcal{N}$. Thus, 
\begin{align*}
\langle f(\cdot), K(\cdot,y) \rangle_{H} &= \langle L^{1}(f)(\cdot), L^{1}( K(\cdot,y) ) \rangle_{H^{1}} +  \langle L^{2}(f)(\cdot), \\
L^{2}( K(\cdot,y) ) \rangle_{H^{2}} &= \langle L^{1}(f)(\cdot), K^{1}(\cdot,y) ) \rangle_{H^{1}} +  \langle L^{2}(f)(\cdot), \\
K^{2}(\cdot,y) ) \rangle_{H^{2}} &= L^{1}(f)(y) + L^{2}(f)(y) = f(y),
\end{align*}
satisfying the reproducing property. From Definition \ref{definition 2}, it follows that $(H, \langle \cdot, \cdot \rangle_{H})$ is a RKHS with kernel $K(\cdot,\cdot)$. For $f \in H$, let $(f^{1}, f^{2}) \in H_{\prod}$ be such that $f = f^{1} + f^{2} = L^{1}(f) + L^{2}(f)$. We note that $(L^{1}(f), L^{2}(f)) = \Pi_{\mathcal{M}}((f^{1}, f^{2}))$.  Computing norm of $f$,
\begin{align*}
&|| f||^{2}_{H} = \langle f, f \rangle_{H} = || L^{1}(f) ||^{2}_{H^1} + || L^{2}(f) ||^{2}_{H^2}\\
&|| f^{1}||^{2}_{H^1} + ||f^{2}||^{2}_{H^2} = || (f^{1},f^{2}) ||^{2}_{H_{\prod}} = \underbrace{||  \Pi_{\mathcal{M}}((f^{1}, f^{2}))} + \\
&  \Pi_{\mathcal{N}(L)}((f^{1}, f^{2}))||^{2}_{H_{\prod}}. \hspace{1.5cm}= || (L^{1}(f),L^{2}(f))||^{2}_{H_{\prod}}
\end{align*}
Thus, for $f^{1}, f^{2}$ such that $f= f^{1}+f^{2}$, the minimum of $|| f^{1}||^{2}_{H^1} +  ||f^{2}||^{2}_{H^2}$ is achieved when $ \Pi_{\mathcal{N}(L)}((f^{1}, f^{2}))||^{2}_{H_{\prod}} = \theta$, i.e., $f^{i} = L^{i}(f)$, and, is equal to $|| f||^{2}_{H}$.
\end{proof}
Given $f \in H^{1}$, we let $f^{1} = f $ and $f^{2} = 0$. From the Theorem \ref{Theorem 1}, we conclude that $||f||_{H} \leq || f ||_{H^1}$. Similarly $|| f ||_{H} \leq || f ||_{H^2}, f \in H^2$.
\begin{corollary}
The uploading operator from agent $i$'s knowledge space, $H^i$,  to the fusion space $H$, $\hat{L}^{i}: H^{i} \to H$, is $\hat{L}(f) = f$. $\hat{L}^{i}(\cdot)$, is linear and is bounded, $|| \hat{L}^{i} || = \sup\{ || f||_{H} : f\in H^{i}, || f ||_{H^{i}} =1 \} \leq 1$.
\end{corollary}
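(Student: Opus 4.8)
The plan is to obtain the statement as an immediate consequence of Theorem~\ref{Theorem 1} together with the inequality $\|f\|_{H}\le\|f\|_{H^i}$ recorded just above the corollary. First I would check that $\hat{L}^{i}$ is a well-defined map \emph{into} $H$: by the definition of the fusion space, every $f\in H^{i}$ already lies in $H$, since $f=f+\theta^{2}$ exhibits $f$ as a sum of an element of $H^{1}$ and an element of $H^{2}$ (and symmetrically for $i=2$). Hence the assignment $f\mapsto f$ sends $H^{i}$ to $H$, so $\hat{L}^{i}$ makes sense.

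Linearity is then essentially automatic: for $f,g\in H^{i}$ and $a,b\in\mathbb{R}$ we have $\hat{L}^{i}(af+bg)=af+bg=a\hat{L}^{i}(f)+b\hat{L}^{i}(g)$, because the vector-space operations in $H^{i}$ and in $H$ are both the pointwise operations on functions $\mathcal{X}\to\mathbb{R}$, and $H^{i}$ is a linear subspace of $H$ in that sense.

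For the norm bound I would feed the trivial decomposition into the variational formula for $\|\cdot\|_{H}$ from Theorem~\ref{Theorem 1}. For $f\in H^{1}$, the choice $f^{1}=f$, $f^{2}=0$ is admissible in the minimization defining $\|f\|_{H}^{2}$, so $\|\hat{L}^{1}(f)\|_{H}^{2}=\|f\|_{H}^{2}\le\|f\|_{H^{1}}^{2}+\|0\|_{H^{2}}^{2}=\|f\|_{H^{1}}^{2}$; the same reasoning with the roles of the two agents swapped gives the analogous bound for $i=2$. Therefore $\hat{L}^{i}$ is bounded, and taking the supremum over the unit sphere of $H^{i}$ yields
\[
\|\hat{L}^{i}\|=\sup\{\,\|f\|_{H}:f\in H^{i},\ \|f\|_{H^{i}}=1\,\}\le 1 .
\]
I do not anticipate any genuine obstacle here; the only point requiring a little care is that the minimum in the norm formula of Theorem~\ref{Theorem 1} ranges over \emph{all} decompositions $f=f^{1}+f^{2}$ with $f^{i}\in H^{i}$, so that the trivial decomposition is allowed — which is exactly how the theorem is stated.
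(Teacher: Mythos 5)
Your proposal is correct and follows essentially the same route as the paper, which (just before the corollary) applies Theorem~\ref{Theorem 1} with the trivial decomposition $f^{1}=f$, $f^{2}=0$ to obtain $\|f\|_{H}\le\|f\|_{H^{i}}$ and then reads off linearity and the bound $\|\hat{L}^{i}\|\le 1$. Your additional remarks on well-definedness and pointwise linearity are routine elaborations of the same argument.
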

\subsection{Retrieval of Individual Knowledge Spaces from Fusion Space}\label{subsection 2.3}
Given a function in the fusion space, it is not necessary that it belongs to both the KSs. Hence, it is mandatory to transform it to a form where it can be expressed using the features of the local KS. The objective of this subsection is to find an operation which would transform the function onto the individual KSs. Given a function in the fusion space, one possibility is to express the function as linear combination of $\{\varphi^{1}_{j}(\cdot)\}_{j \in \mathcal{I}^1} \cup \{\varphi^{2}_{j}(\cdot)\}_{j \in \mathcal{I}^2}$ and then simply set the coefficients of the functions which do not belong to a local KS to zero. The resulting function need not belong to the local KS, i.e., need not satisfy the reproducing property of the KS. To ensure the same, in the following derivation every function in either of the local KS can be retrieved from a function in the fusion space. The operation corresponding to one of the KS when applied to a function in the fusion space would only result in a function in the KS.
\begin{lemma}\label{Lemma 1}
For every $f\in H$, there exits $\{y_{k,f}\}^{n}_{k=1}$ and $\{\beta_{k,f}\}^{n}_{k=1}$ such that,
\begin{align*}
f(\cdot) \hspace{-3pt}= \hspace{-2pt} \sum_{i=1,2}\sum_{j \in \mathcal{I}^{i}}\sum^{n}_{k=1}\beta_{k,f}\varphi^{i}_{j}(y_{k,f})\varphi^{i}_{j}(\cdot) =  \sum^{n}_{k=1}\beta_{k,f}K(\cdot,y_{k,f}).
\end{align*}
\end{lemma}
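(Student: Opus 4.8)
The plan is to exploit two facts already available: first, that the fusion space $H=H^{1}+H^{2}$ is finite-dimensional, with $\dim H \le |\mathcal{I}^{1}|+|\mathcal{I}^{2}| =: m$, since $H$ is the image of the finite-dimensional space $H_{\prod}$ under the linear map $L$; and second, that $H$ carries the inner product $\langle\cdot,\cdot\rangle_{H}$ for which, by Theorem \ref{Theorem 1}, $K(\cdot,\cdot)=K^{1}(\cdot,\cdot)+K^{2}(\cdot,\cdot)$ is a reproducing kernel. The strategy is to show that the kernel sections $\{K(\cdot,y):y\in\mathcal{X}\}$ already span all of $H$, and then to expand an arbitrary $f\in H$ in finitely many of them.

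First I would set $S:=\operatorname{span}\{K(\cdot,y):y\in\mathcal{X}\}$, noting $S\subseteq H$ because each $K(\cdot,y)=K^{1}(\cdot,y)+K^{2}(\cdot,y)$ lies in $H$ by construction of the fusion space. Since $H$ is finite-dimensional we may write $H=S\oplus S^{\perp}$ with $S^{\perp}$ the orthogonal complement of $S$ in $(H,\langle\cdot,\cdot\rangle_{H})$. If $g\in S^{\perp}$, then the reproducing property gives $g(y)=\langle g,K(\cdot,y)\rangle_{H}=0$ for every $y\in\mathcal{X}$, so $g$ is the null function $\theta$. Hence $S^{\perp}=\{\theta\}$ and therefore $S=H$. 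Consequently there exist $y_{1,f},\dots,y_{n,f}\in\mathcal{X}$ (one may take $n=\dim H\le m$, uniformly in $f$, by first choosing points whose kernel sections form a basis of $H$) and scalars $\beta_{1,f},\dots,\beta_{n,f}\in\mathbb{R}$ with $f(\cdot)=\sum_{k=1}^{n}\beta_{k,f}K(\cdot,y_{k,f})$. Substituting $K(\cdot,y)=\sum_{i=1,2}\sum_{j\in\mathcal{I}^{i}}\varphi^{i}_{j}(y)\varphi^{i}_{j}(\cdot)$ and interchanging the finite sums yields the double-sum form in the statement.

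I do not expect a genuine obstacle here; the only point needing a little care is the passage from ``$S=H$'' to a representation using \emph{finitely many} kernel sections, but this is immediate once finite-dimensionality of $H$ is in hand, since any spanning set of a finite-dimensional space contains a finite basis. An alternative route that avoids orthogonal complements is to fix any basis $\{\psi_{1},\dots,\psi_{m}\}$ of $H$ and choose $y_{1},\dots,y_{m}$ inductively so that the matrix $\bigl(\psi_{r}(y_{s})\bigr)_{r,s}$ is nonsingular — possible because no nonzero element of $H$ vanishes identically on $\mathcal{X}$ — after which every $f\in H$ is recovered as a linear combination of $\psi_{1}(\cdot),\dots,\psi_{m}(\cdot)$ and hence, by a change of coefficients, of $K(\cdot,y_{1}),\dots,K(\cdot,y_{m})$; but the reproducing-property argument above is the cleaner one.
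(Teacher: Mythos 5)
Your proof is correct, but it follows a genuinely different route from the paper. You work entirely inside the fusion space: using the reproducing property of $K$ established in Theorem \ref{Theorem 1}, you show that the span $S$ of the kernel sections $\{K(\cdot,y)\}_{y\in\mathcal{X}}$ has trivial orthogonal complement (any $g\perp S$ satisfies $g(y)=\langle g,K(\cdot,y)\rangle_{H}=0$ for all $y$, hence $g=\theta$ since elements of $H$ are genuine functions), and finite-dimensionality of $H$ then upgrades ``$S$ dense'' to $S=H$ and guarantees a representation with finitely many sections. This is the classical RKHS spanning argument, and it is shorter and cleaner than the paper's proof. The paper instead argues in the coordinate space $\mathbb{R}^{\mathcal{I}^1+\mathcal{I}^2}$: it identifies the complement $\bar{\mathcal{M}}$ of the null space $\mathcal{N}$ with the span of the feature-evaluation vectors $\Phi(y)$, and transports this back through the isomorphism with $\mathcal{M}\subset H_{\prod}$. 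What that extra work buys is a by-product your argument does not provide: the explicit characterization $\big(L^{1}(f),L^{2}(f)\big)=\big(\sum_{k}\beta_{k,f}K^{1}(\cdot,y_{k,f}),\sum_{k}\beta_{k,f}K^{2}(\cdot,y_{k,f})\big)$, i.e.\ the components of $f$ in the individual knowledge spaces are the \emph{same} linear combination of the sections of $K^{1}$ and $K^{2}$; this is exactly what the remark following Lemma \ref{Lemma 2} relies on when evaluating $\langle f(\cdot),K^{i}(\cdot,y)\rangle_{H}$ through the decomposition $\langle L^{1}(f),L^{1}(K^{i}(\cdot,y))\rangle_{H^{1}}+\langle L^{2}(f),L^{2}(K^{i}(\cdot,y))\rangle_{H^{2}}$. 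So as a proof of the lemma as stated, your argument suffices and is arguably preferable; if one also wants the componentwise statement used later, one must either adopt the paper's coordinate argument or supplement yours with the (easy) observation that $(K^{1}(\cdot,y),K^{2}(\cdot,y))\in\mathcal{M}$, so that applying $L^{i}\circ L_{\mathcal{M}}^{-1}$ to $\sum_{k}\beta_{k,f}K(\cdot,y_{k,f})$ gives $\sum_{k}\beta_{k,f}K^{i}(\cdot,y_{k,f})$ by linearity.
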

\begin{proof}
Given $f\in H$, we characterize $L^{1}(f)$ and $L^{2}(f)$. Let, 
\begin{align*}
\Phi(y) \hspace{-2pt} = \hspace{-2pt}[\varphi^{1}_{1}(y), \ldots, \varphi^{1}_{\mathcal{I}^1}(y),\varphi^{2}_{1}(y), \ldots, \varphi^{2}_{\mathcal{I}^1}(y)] \hspace{-2pt} \in \hspace{-2pt}  \mathbb{R}^{\mathcal{I}^1 + \mathcal{I}^2}\hspace{-3pt}, \;
\end{align*}
$y \in \mathcal{X} $. Let $\hat{\mathcal{M}}$ be the span of $\{\Phi(y)\}_{y\in \mathcal{X}}$, i.e., $\hat{\mathcal{M}} = \{\boldsymbol{\gamma} \in \mathbb{R}^{\mathcal{I}^1 + \mathcal{I}^2}\hspace{-3pt}:  \boldsymbol{\gamma} = \sum^{n}_{k=1} \beta_{k}\Phi(y_k), n \in \mathbb{N}, \{\beta_{k}\}^n_{k =1} \subset \mathbb{R}\}$, and $\bar{\mathcal{M}}$ be such that $ \mathbb{R}^{\mathcal{I}^1 + \mathcal{I}^2} = \bar{\mathcal{M}} \oplus \mathcal{N}$. We claim that $\bar{\mathcal{M}} = \hat{\mathcal{M}}$. Indeed, from the definition of $\mathcal{N}$, it follows that for every $\boldsymbol{\alpha} \in \mathcal{N}$, $\boldsymbol{\alpha}^{T}\Phi(y) = 0, \; \forall y \in \mathcal{X}$. Thus, if $ \boldsymbol{\gamma} \in \hat{\mathcal{M}}$, then $\boldsymbol{\gamma}^{T}\boldsymbol{\alpha} = 0, \; \forall \boldsymbol{\alpha} \in \mathcal{N}$, i.e., $\boldsymbol{\gamma} \in \bar{\mathcal{M}}$. Suppose, $\boldsymbol{\bar{\gamma}} \in \bar{\mathcal{M}}$ is equal to $\boldsymbol{\gamma} + \boldsymbol{v}$, where $\boldsymbol{\gamma} \in \hat{\mathcal{M}}$ and $\boldsymbol{v} \in \mathbb{R}^{\mathcal{I}^1 + \mathcal{I}^2}, \boldsymbol{v} \neq \theta, \boldsymbol{v} \notin \hat{\mathcal{M}}$. Then, $\boldsymbol{\bar{\gamma}}^T \boldsymbol{\alpha} = 0, \; \boldsymbol{\alpha} \in \mathcal{N}$ which implies $\boldsymbol{v}^T\boldsymbol{\alpha}=0, \; \boldsymbol{\alpha} \in \mathcal{N}$. This is a contradiction as $\boldsymbol{\alpha} \in \mathcal{N}$ if and only if $\boldsymbol{\alpha}^T\Phi(y) = 0, \; \forall y \in \mathcal{X}$. The condition $\boldsymbol{v}^T\boldsymbol{\alpha}=0, \; \boldsymbol{\alpha} \in \mathcal{N}$ introduces an additional constraint which is not satisfied, except when $\boldsymbol{v} = \theta$. Thus, $\boldsymbol{\bar{\gamma}} = \boldsymbol{\gamma} \in \hat{\mathcal{M}}$. Since $\hat{\mathcal{M}} = \bar{\mathcal{M}}$ is isomorphic to $\mathcal{M}$, every vector in $\mathcal{M}$ is given by 
\begin{align*}
&\Big(\hspace{-2pt}\sum_{j\in \mathcal{I}^{1}}\gamma^{1}_{j}\big(\varphi^{1}_{j} \times \theta^2\big) + \sum_{j\in \mathcal{I}^{2}}\gamma^{2}_{j}\big(\theta^1  \times \varphi^{2}_{j}\big)\Big) = \Big(\hspace{-3pt}\sum_{j\in \mathcal{I}^{1}}\gamma^{1}_{j}\varphi^{1}_{j},\\ 
&\sum_{j\in \mathcal{I}^{2}}\gamma^{2}_{j}\varphi^{2}_{j}\Big) \hspace{-2pt}= \hspace{-3pt}\Big( \hspace{-4pt}\sum_{j\in \mathcal{I}^{1}}\sum^{n}_{k=1}\beta_k\varphi^{1}_{j}(y_k)\varphi^{1}_{j}, \sum_{j\in \mathcal{I}^{2}} \sum^{n}_{k=1}\beta_k\varphi^{2}_{j}(y_k)\varphi^{2}_{j}\Big), 
\end{align*}
where $\boldsymbol{\gamma} = [\gamma^{1}_{1}, \ldots, \gamma^{1}_{\mathcal{I}^1}, \gamma^{2}_{1}, \ldots, \gamma^{2}_{\mathcal{I}^2}] \in \bar{\mathcal{M}}$ and $\gamma^{i}_{j} =\sum^{n}_{k=1}\beta_{k}\varphi^{i}_{j}(y_k)$. Thus, for any $f \in H$, $(L^{1}(f), L^{2}(f)) = \Big(\sum^{n}_{k=1}\beta_{k,f}K^{1}(\cdot, y_{k,f}), \sum^{n}_{k=1}\beta_{k,f}K^{2}(\cdot, y_{k,f})\Big)$. This implies that, $f = L^{1}(f) + L^{2}(f) =  \sum^{n}_{k=1}\beta_{k,f}K(\cdot,y_{k,f})$.
\end{proof}
\begin{lemma}\label{Lemma 2}
Given the RKHS, $(H, \langle \cdot,\cdot \rangle_{H})$, with kernel $K(\cdot,\cdot)$ and the kernels $K^{i}(\cdot,\cdot),\;i=1,2$, such that $K(x,y) = K^{1}(x,y) + K^{2}(x,y)$, we define operators, $\bar{L}^{i}: H \to H$, as
\begin{align*}
\bar{L}^{i}(f)(x) =\langle f(\cdot), K^{i}(\cdot,x) \rangle_{H}, \text{ for}, i=1,2.
\end{align*}
Then, $\bar{L}^{i}$ is linear, symmetric, positive and bounded, $|| \bar{L}^{i} ||\leq 1$.
\end{lemma}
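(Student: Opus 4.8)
The plan is to reduce everything to the representation supplied by Lemma~\ref{Lemma 1} together with the reproducing property of $(H,\langle\cdot,\cdot\rangle_{H})$ established in Theorem~\ref{Theorem 1}. First I would isolate the one computational fact that drives the whole argument: since $K^{i}(\cdot,x)\in H^{i}\subseteq H$, applying the reproducing property of $H$ to the function $K^{i}(\cdot,x)$ gives
\begin{align*}
\langle K(\cdot,y),K^{i}(\cdot,x)\rangle_{H}=\langle K^{i}(\cdot,x),K(\cdot,y)\rangle_{H}=K^{i}(y,x)=K^{i}(x,y),\quad\forall x,y\in\mathcal{X}.
\end{align*}
Then, writing an arbitrary $f\in H$ as $f(\cdot)=\sum_{k=1}^{n}\beta_{k,f}K(\cdot,y_{k,f})$ via Lemma~\ref{Lemma 1} and using linearity of the inner product in its first slot,
\begin{align*}
\bar{L}^{i}(f)(x)=\Big\langle \sum_{k=1}^{n}\beta_{k,f}K(\cdot,y_{k,f}),\,K^{i}(\cdot,x)\Big\rangle_{H}=\sum_{k=1}^{n}\beta_{k,f}K^{i}(x,y_{k,f}),
\end{align*}
so $\bar{L}^{i}(f)(\cdot)=\sum_{k}\beta_{k,f}K^{i}(\cdot,y_{k,f})\in H^{i}\subseteq H$. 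This shows $\bar{L}^{i}$ is well defined as a map $H\to H$ (indeed it agrees with the operator $L^{i}$ of Lemma~\ref{Lemma 1} followed by the inclusion $H^{i}\hookrightarrow H$), and linearity of $\bar{L}^{i}$ is immediate from linearity of $\langle\cdot,\cdot\rangle_{H}$ in the first argument.

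For symmetry I would take $f,g\in H$, expand $g(\cdot)=\sum_{l}\gamma_{l,g}K(\cdot,z_{l,g})$, and use the reproducing property of $H$ to get $\langle\bar{L}^{i}(f),g\rangle_{H}=\sum_{l}\gamma_{l,g}\,\bar{L}^{i}(f)(z_{l,g})=\sum_{l}\gamma_{l,g}\,\langle f,K^{i}(\cdot,z_{l,g})\rangle_{H}$, which by the displayed formula for $\bar{L}^{i}(g)$ equals $\langle f,\bar{L}^{i}(g)\rangle_{H}$. For positivity, I would take $g=f$ and insert the Lemma~\ref{Lemma 1} representation $f(\cdot)=\sum_{k}\beta_{k,f}K(\cdot,y_{k,f})$ on both sides, obtaining
\begin{align*}
\langle\bar{L}^{i}(f),f\rangle_{H}=\sum_{k=1}^{n}\beta_{k,f}\,\bar{L}^{i}(f)(y_{k,f})=\sum_{k=1}^{n}\sum_{l=1}^{n}\beta_{k,f}\beta_{l,f}K^{i}(y_{k,f},y_{l,f})=\boldsymbol{\beta}_{f}^{T}\mathbf{K^i}\boldsymbol{\beta}_{f}\geq 0,
\end{align*}
since $K^{i}$ is a positive definite kernel (as shown in Section~\ref{section 2}), where $\mathbf{K^i}$ here is the Gram matrix of $K^{i}$ with respect to $\{y_{1,f},\dots,y_{n,f}\}$.

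Boundedness is automatic, since $H=H^{1}+H^{2}$ is finite dimensional. For the norm bound I would first record the identity $\bar{L}^{1}(f)+\bar{L}^{2}(f)=f$, which follows from $\langle f,K^{1}(\cdot,x)\rangle_{H}+\langle f,K^{2}(\cdot,x)\rangle_{H}=\langle f,K(\cdot,x)\rangle_{H}=f(x)$. Combined with positivity of both $\bar{L}^{1}$ and $\bar{L}^{2}$, this yields $0\leq\langle\bar{L}^{i}(f),f\rangle_{H}\leq\langle\bar{L}^{1}(f),f\rangle_{H}+\langle\bar{L}^{2}(f),f\rangle_{H}=||f||^{2}_{H}$ for every $f\in H$. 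Since $\bar{L}^{i}$ is symmetric and positive on the finite-dimensional inner product space $H$, its operator norm equals $\sup_{||f||_{H}=1}\langle\bar{L}^{i}(f),f\rangle_{H}$, which is therefore at most $1$.

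The step I expect to require the most care is the first one: keeping straight that $\bar{L}^{i}$ is defined through the $H$-inner product applied to $K^{i}(\cdot,x)$, not the $H^{i}$-inner product, and that the decomposition inner product on $H$ from Theorem~\ref{Theorem 1} does \emph{not} split $K^{i}(\cdot,x)$ naively; it is the reproducing identity $\langle K(\cdot,y),K^{i}(\cdot,x)\rangle_{H}=K^{i}(x,y)$, valid precisely because $K^{i}(\cdot,x)\in H$, that makes the computation close, after which everything reduces to bookkeeping on top of Lemma~\ref{Lemma 1} and the positive definiteness of $K^{i}$.
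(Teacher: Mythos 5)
Your proposal is correct and follows essentially the same route as the paper: represent $f$ via Lemma~\ref{Lemma 1}, use the reproducing property to get $\bar{L}^{i}(f)=\sum_{k}\beta_{k,f}K^{i}(\cdot,y_{k,f})$, deduce positivity from the positive definiteness of $K^{i}$, symmetry from the same kernel identities, and the bound $\|\bar{L}^{i}\|\leq 1$ from $\bar{L}^{1}+\bar{L}^{2}=\mathbb{I}$ together with symmetry and positivity. The only difference is cosmetic: you spell out the operator-norm step ($\|\bar{L}^{i}\|=\sup_{\|f\|_{H}=1}\langle\bar{L}^{i}f,f\rangle_{H}$) that the paper leaves implicit.
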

\begin{proof}
From the linearity of the inner product with respect to the first argument, it follows that $\bar{L}^{i}$ is linear. From equation set (\ref{Equation 1}), we note that $\bar{L}^{i}$ is positive, i.e., $\langle \bar{L}^{i}(f), f \rangle_{H} \geq 0, \; \forall f \in H$.  From equation set(\ref{Equation 2}), we note that $\bar{L}^{i}$ is symmetric, i.e., $\langle \bar{L}^{i}(f), g \rangle_{H} = \langle f, \bar{L}^{i}(g) \rangle_{H}, \; \forall f,g \in H$. Since $\bar{L}^{1}$ and $\bar{L}^{2}$ are symmetric, positive, and their sum is the identity operator ($\bar{L}^{1} + \bar{L}^{2} = \mathbb{I}$),  $0\leq || \bar{L}^{i} || \leq 1$.
\begin{figure*}
\hrulefill 
\begin{align}
f = & \sum^{n}_{j=1}\beta_{j}K(\cdot, y_{j}), \bar{L}^{i}(f)(x)  =\langle \sum^{n}_{j=1}\beta_{j}K(\cdot, y_{j}) , K^{i}(\cdot, x) \rangle_{H}  = \sum^{n}_{j=1}\beta_{j}\langle K^{i}(\cdot, x) ,  K(\cdot, y_{j}) \rangle_{H}= \sum^{n}_{j=1}\beta_{j}K^{i}(y_{j}, x)\nonumber\\
=&\sum^{n}_{j=1}\beta_{j}K^{i}(x, y_{j}). \; \langle \bar{L}^{i}(f), f \rangle_{H} \hspace{-2pt} =  \hspace{-2pt}\langle \sum^{n}_{j=1}\beta_{j}K^{i}(\cdot, y_{j}) , \sum^{n}_{k=1}\beta_{k}K(\cdot, y_{k}) \rangle_{H} =\sum^{n}_{j=1}\sum^{n}_{k=1}\beta_{j}\beta_{k}K^{i}(y_k,y_j) = ||\bar{L}^i(f)||^2_{H^{i}} \geq 0.\label{Equation 1} \\ 
g \hspace{-2pt}= \hspace{-2pt} &\sum^{m}_{l=1}\delta_{l}K(\cdot, y_{l}), \langle \bar{L}^{i}(f), g \rangle_{H}\hspace{-2pt} = \hspace{-2pt} \langle \sum^{n}_{j=1}\beta_{j}K^{i}(\cdot, y_{j}), \sum^{m}_{l=1}\delta_{l}K(\cdot, y_{l}) \rangle_{H} \hspace{-2pt} = \hspace{-2pt}\sum^{n}_{j=1}\sum^{m}_{l=1}\beta_{j}\delta_{l}K^{i}(y_{l},y_{j}).\langle f, \bar{L}^{i}(g) \rangle_{H} \hspace{-2pt} = \hspace{-4pt}  \langle  \sum^{n}_{j=1}\beta_{j}K(\cdot, y_{j}), \nonumber\\
\sum^{m}_{l=1}&\delta_{l}K^{i}(\cdot, y_{l})\rangle_{H} \hspace{-3pt}  = \hspace{-4pt}  \sum^{n}_{j=1}\sum^{m}_{l=1}\beta_{j}\delta_{l}\langle K^{i}(\cdot, y_{l}), K(\cdot, y_{j})\rangle_{H}\hspace{-2pt}  = \hspace{-4pt} \sum^{n}_{j=1}\sum^{m}_{l=1}\beta_{j}\delta_{l}K^{i}(y_j, y_{l}) \hspace{-2pt}  = \hspace{-2pt}  \sum^{n}_{j=1}\sum^{m}_{l=1}\beta_{j}\delta_{l}K^{i}(y_l, y_{j}) \hspace{-3pt}  =  \hspace{-3pt}  \langle \bar{L}^{i}(f), g \rangle_{H}.  \label{Equation 2}
\end{align}
\hrulefill 
\end{figure*}
\end{proof}
\begin{remark}
Invoking Lemma \ref{Lemma 1} and the definition of inner product on $H$, the inner product in the definition of operator $\bar{L}^{i}$ could be evaluated as follows, 
\begin{align*}
&\bar{L}^{i}(f)(y) =\langle f(\cdot), K^{i}(\cdot,y) \rangle_{H} =\langle L^{1}(f), L^{1}(K^{i}(\cdot,y)) \rangle_{H^{1}} +  \\
&\langle L^{2}(f), L^{2}(K^{i}(\cdot,y)) \rangle_{H^{2}} \hspace{-3pt} =  \hspace{-3pt}  \langle \sum^{n}_{k=1}\beta_{k,f}K^{1}(\cdot, y_{k,f}),  \hspace{-2pt} \sum^{m}_{l=1}\beta_{l,K^{i}(\cdot,y)}\\
&K^{1}(\cdot, y_{l,K^{i}(\cdot,y)})\rangle_{H^{1}} + \langle \sum^{n}_{k=1}\beta_{k,f}K^{2}(\cdot, y_{k,f}),
\sum^{m}_{l=1}\beta_{l,K^{i}(\cdot,y)}\\
&K^2(\cdot, y_{l,K^{i}(\cdot,y)})\rangle_{H^{2}} =  \sum^{n}_{k=1}\sum^{m}_{l=1}\beta_{k,f}\beta_{l,K^{i}(\cdot,y)}K^{1}(y_{l,K^{i}(\cdot,y)}, \\
&y_{k,f}) + \sum^{n}_{k=1}\sum^{m}_{l=1}\beta_{k,f}\beta_{l,K^{i}(\cdot,y)}K^{2}(y_{l,K^{i}(\cdot,y)}, y_{k,f}).
\end{align*}
Unlike equation set (\ref{Equation 1}), the above evaluation does not result in a ``function" form for $\bar{L}^{i}(f)$. This is because in the above evaluation, the functions get mapped to their versions in the individual KSs and the inner product is evaluated there. To obtain a function form as in equation set (\ref{Equation 1}), it is necessary to evaluate the inner product in the fusion space.
\end{remark}
\begin{theorem}\label{Theorem 2}
Let $L: H \to H$ be a symmetric, positive, bounded operator. There exists a unique square root of operator $L$, $\sqrt{L}: H \to H$, i.e., $\sqrt{L} (\sqrt{L} (f)) = L(f) \forall f \in H$. $\sqrt{L}(\cdot)$ is linear, bounded, symmetric. 
\end{theorem}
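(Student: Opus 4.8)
The plan is to reduce the statement to the spectral theorem. In the setting of this paper the fusion space $H$ is finite dimensional ($|\mathcal{I}^1|, |\mathcal{I}^2| < \infty$, so $\dim H \le |\mathcal{I}^1| + |\mathcal{I}^2|$), hence a symmetric positive operator $L$ is an ordinary symmetric positive semidefinite matrix. First I would diagonalize: pick an orthonormal basis $\{v_k\}$ of $(H,\langle\cdot,\cdot\rangle_H)$ consisting of eigenvectors of $L$ with eigenvalues $\lambda_k = \langle L v_k, v_k\rangle_H \ge 0$, and \emph{define} $\sqrt{L}$ by $\sqrt{L}(v_k) = \sqrt{\lambda_k}\,v_k$, extended by linearity. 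Then linearity is built in; symmetry holds because $\sqrt{L}$ is diagonal with real entries in an orthonormal basis; positivity holds because the entries $\sqrt{\lambda_k}$ are nonnegative; boundedness holds because $\|\sqrt{L}\| = \max_k \sqrt{\lambda_k} = \sqrt{\|L\|}$; and evaluating on basis vectors gives $\sqrt{L}(\sqrt{L}(v_k)) = \lambda_k v_k = L(v_k)$, so $\sqrt{L}\circ\sqrt{L} = L$ by linearity. This settles existence together with the three claimed properties.

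For uniqueness I would argue among symmetric positive square roots (positivity is essential here, since $-\sqrt{L}$, and more generally operators with mixed signs on the eigenspaces, also square to $L$). Suppose $B : H \to H$ is symmetric, positive, and $B\circ B = L$. The key observation is that $B$ commutes with $L$, because $B L = B(BB) = (BB)B = L B$. Consequently $B$ maps each eigenspace $E_k = \ker(L - \lambda_k\mathbb{I})$ into itself, and the restriction $B|_{E_k}$ is a symmetric positive operator whose square is $\lambda_k\mathbb{I}$. Diagonalizing this restriction, every eigenvalue $\mu$ of $B|_{E_k}$ satisfies $\mu \ge 0$ and $\mu^2 = \lambda_k$, forcing $\mu = \sqrt{\lambda_k}$; hence $B|_{E_k} = \sqrt{\lambda_k}\,\mathbb{I}$. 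Since the $E_k$ span $H$, we conclude $B = \sqrt{L}$.

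If one does not wish to rely on finite-dimensionality, the same result follows by the classical iteration: after rescaling so that $\|L\| \le 1$ and setting $A = \mathbb{I} - L$ with $0 \le A \le \mathbb{I}$, one builds $\sqrt{\mathbb{I}-A}$ as the strong limit of $Y_0 = 0$, $Y_{n+1} = Y_n + \tfrac12(A - Y_n^2)$, checking by induction that each $Y_n$ is a polynomial in $A$ with nonnegative coefficients and $0 \le Y_n \le Y_{n+1} \le \mathbb{I}$; uniqueness then again uses that any positive square root $B$ commutes with $L$, hence with $\sqrt{L}$ (a norm/strong limit of polynomials in $L$), so that for $g = (\sqrt{L} - B)f$ one has $\langle \sqrt{L}\,g, g\rangle_H + \langle B g, g\rangle_H = \langle (\sqrt{L}^{\,2} - B^2)f, g\rangle_H = 0$, forcing $\sqrt{L}\,g = Bg = 0$ and then $\|(\sqrt{L}-B)f\|_H^2 = \langle(\sqrt{L}-B)g, f\rangle_H = 0$. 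The existence bookkeeping is routine; the step that deserves real care, and the only place where positivity of $B$ (rather than merely $B^2 = L$) is used, is the commutation argument in the uniqueness half — that a positive square root must commute with $L$ and therefore respect its eigenspace decomposition.
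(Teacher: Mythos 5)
Your proof is correct. The existence half coincides with the paper's argument: both invoke the spectral theorem to get an orthonormal eigenbasis of $L$ with nonnegative eigenvalues, define $\sqrt{L}$ on that basis by $\sqrt{\lambda_k}$, extend by linearity, and read off symmetry, positivity, boundedness and $\sqrt{L}\circ\sqrt{L}=L$. The uniqueness half takes a slightly different route: you prove that any symmetric positive square root $B$ commutes with $L$ (via $BL=B(BB)=(BB)B=LB$), hence preserves each eigenspace $E_k$ of $L$, and then diagonalize the restriction $B|_{E_k}$ to force $B|_{E_k}=\sqrt{\lambda_k}\,\mathbb{I}$. The paper instead diagonalizes the competing square root $\tilde{L}$, observes that its eigenvectors are eigenvectors of $L$ and hence of $\sqrt{L}$, and matches eigenvalues using positivity. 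The two arguments are equivalent in content, but yours handles repeated eigenvalues more transparently (the paper's step ``by definition of $\sqrt{L}$, the $\phi_j$ are eigenvectors of $\sqrt{L}$'' tacitly uses that $\sqrt{L}$ acts as a scalar on each eigenspace of $L$, which your eigenspace decomposition makes explicit), and you correctly flag that uniqueness only holds among \emph{positive} square roots, an assumption the paper also makes but does not state in the theorem. Your closing Riesz--Nagy iteration gives the result without finite-dimensionality, which the paper neither needs (since $\dim H\le|\mathcal{I}^1|+|\mathcal{I}^2|<\infty$) nor provides; it is a nice bonus but not required for Theorem~\ref{Theorem 3}, where only the finite-dimensional construction of $\sqrt{\bar{L}^i}$ is used.
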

\begin{proof}
Since $L$ is symmetric, from the spectral theorem, it follows that (i) there exists an orthonormal basis of $H$, $\{\varphi_{j}\}_{j \in \mathcal{I}}$, which are eigenvectors of $L$ ; (ii) the eigenvalues of $L$, $\{\lambda_{j}\}^{\mathcal{I}_{\lambda}}_{j=1}$, are real. Since $L$ is positive,  $\lambda_{j} \geq 0, 1 \leq j \leq \mathcal{I}_{\lambda}$. Let $f = \sum^{\mathcal{I}}_{j=1}\alpha_j \varphi_{j}, \{\alpha_j\}\subset \mathbb{R}$. $L(f) = L(\sum^{\mathcal{I}}_{j=1}\alpha_j \varphi_{j}) = \sum^{\mathcal{I}}_{j=1}\alpha_j\lambda_j\varphi_{j}$. From the construction of $H$, it follows that  $\mathcal{I} \leq \mathcal{I}^1 + \mathcal{I}^2$. Since some of the eigenvalues of $L$ could repeated, $\mathcal{I}_{\lambda} \leq \mathcal{I}$. The operator $\sqrt{L}$ is defined as $\sqrt{L}(\varphi_{j}) = \sqrt{\lambda_j}\varphi_{j}$, and, imposing linearity $\sqrt{L}(f) = \sum^{\mathcal{I}}_{j=1}\alpha_j\sqrt{\lambda_j} \varphi_{j}$.
\begin{align*}
&\sqrt{L}(\sqrt{L}(f)) = \sqrt{L}( \sum^{\mathcal{I}}_{j=1}\alpha_j\sqrt{\lambda_j} \varphi_{j}) =  \sum^{\mathcal{I}}_{j=1}\alpha_j\lambda_j\varphi_{j} = L(f) \\
&\langle \sqrt{L}(f), f \rangle_{H} \hspace{-3pt} = \hspace{-3pt} \langle \sum^{\mathcal{I}}_{j=1}\alpha_j\sqrt{\lambda_j} \varphi_{j},\hspace{-1pt}\sum^{\mathcal{I}}_{l=1}\alpha_l \varphi_{l}\rangle_{H} \hspace{-3pt} =  \hspace{-3pt} \sum^{\mathcal{I}}_{j=1} \alpha^2_j \sqrt{\lambda_j} \geq 0 \\
&\langle \sqrt{L}(f), g \rangle_{H} = \langle \sum^{\mathcal{I}}_{j=1}\alpha_j\sqrt{\lambda_j} \varphi_{j},\hspace{-1pt}\sum^{\mathcal{I}}_{l=1}\beta_l \varphi_{l}\rangle_{H}  \hspace{-1pt} = \hspace{-1pt}  \sum^{\mathcal{I}}_{j=1} \alpha_j\beta_j \sqrt{\lambda_j} \\
&=\langle \sum^{\mathcal{I}}_{j=1}\alpha_j\varphi_{j},\hspace{-1pt}\sum^{\mathcal{I}}_{l=1}\beta_l\sqrt{\lambda_l} \varphi_{l}\rangle_{H} = \langle f,  \sqrt{L}(g) \rangle_{H}. 
\end{align*}
Suppose $\tilde{L}$ is a positive  semi-definite operator which is another square root for $L$. To prove uniqueness, it suffices to prove that $\tilde{L}(\phi_{j}) = \sqrt{L}(\phi_j) \forall \;j$, where $\{\phi_j\}_{j \in \mathcal{I}}$ is a basis for $H$.  By the spectral theorem, $\tilde{L}$ posses a set of orthonormal eigenvectors,$\{\phi_j\}_{j \in \mathcal{I}}$, which form a basis for $H$. $L(\phi_{j}) = \tilde{L}\big(\tilde{L}(\phi_j)\big) = \tilde{L}(\bar{\lambda}_j\phi_j) = \bar{\lambda}^2_j\phi_j$. Thus, $\{\phi_j\}_{j \in \mathcal{I}}$ are eigenvectors for $L$. By definition of $\sqrt{L}$, $\{\phi_j\}_{j \in \mathcal{I}}$ are eigenvectors for $\sqrt{L}$ as well. As the eigenvectors $\{\phi_j\}_{j \in \mathcal{I}}$ could be ordered differently than  $\{\varphi_j\}_{j \in \mathcal{I}}$, we denote the corresponding eigenvalues by $\hat{\lambda}_j$. Thus, $\sqrt{L}\big(\sqrt{L}(\phi_j)\big) = \hat{\lambda}^2_j \phi_j = L(\phi_j) = \bar{\lambda}^2_j\phi_j$. Since the operators are positive semidefinite, $\hat{\lambda}_j = \bar{\lambda}_j$. Hence, $\tilde{L}(\phi_j) = \sqrt{L}(\phi_j), \forall j$, i.e. $\tilde{L} =\sqrt{L}$.
\end{proof}
The above theorem, Theorem \ref{Theorem 2}, is well known in linear algebra and we mention the proof as the construction of the square root operator is essential for the proof of the theorem below. 
\begin{theorem}\label{Theorem 3}
The linear space  $\bar{H}^{i} = \{g : g = \sqrt{\bar{L}^i}(f), f\in H \}$ is a RKHS with kernel $K^i$.  $\sqrt{\bar{L}^i}(\cdot)$ establishes an isometric isomorphism between $\mathcal{N}\big(\sqrt{\bar{L}^i}\big)^{\perp}$ and $\bar{H}^{i}$, and the norm, $||f||_{\bar{H}^{i}} = ||g||_{H}$,  where $f = \sqrt{\bar{L}^i} g,  g\in \mathcal{N}\big(\sqrt{\bar{L}^i}\big)^{\perp}$. Thus, the individual knowledge spaces can be retrieved from the fusion space. 
\end{theorem}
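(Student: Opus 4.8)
The plan is to show that $\sqrt{\bar{L}^i}$ transplants the Hilbert space structure of $H$ onto $\bar{H}^i$ and that the resulting reproducing kernel is exactly $K^i$. First I would establish that $\sqrt{\bar{L}^i}$ is injective on $\mathcal{N}(\sqrt{\bar{L}^i})^\perp$ and hence a bijection onto $\bar{H}^i$; since the operator is symmetric and bounded (Theorem \ref{Theorem 2}), $H = \mathcal{N}(\sqrt{\bar{L}^i}) \oplus \mathcal{N}(\sqrt{\bar{L}^i})^\perp$ and the restriction to the orthogonal complement is one-to-one. I would then \emph{define} the inner product on $\bar{H}^i$ by pulling back, i.e. for $f_1 = \sqrt{\bar{L}^i} g_1$, $f_2 = \sqrt{\bar{L}^i} g_2$ with $g_1, g_2 \in \mathcal{N}(\sqrt{\bar{L}^i})^\perp$, set $\langle f_1, f_2 \rangle_{\bar{H}^i} := \langle g_1, g_2 \rangle_H$. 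This is well-defined precisely because of the injectivity just noted, it is manifestly an inner product, and $\sqrt{\bar{L}^i}$ is by construction an isometric isomorphism from $\mathcal{N}(\sqrt{\bar{L}^i})^\perp$ onto $\bar{H}^i$, which is one of the claims.

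Next I would verify the two RKHS axioms from Definition \ref{definition 2}. For membership, $K^i(\cdot,y) \in \bar{H}^i$: since $\bar{L}^i = (\sqrt{\bar{L}^i})^2$ and, by equation set (\ref{Equation 1}), $\bar{L}^i(K(\cdot,y)) = K^i(\cdot,y)$, we have $K^i(\cdot,y) = \sqrt{\bar{L}^i}\big(\sqrt{\bar{L}^i}(K(\cdot,y))\big) \in \bar{H}^i$. For the reproducing property, take $f = \sqrt{\bar{L}^i} g$ with $g \in \mathcal{N}(\sqrt{\bar{L}^i})^\perp$; I want $\langle f, K^i(\cdot,y)\rangle_{\bar{H}^i} = f(y)$. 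Writing $K^i(\cdot,y) = \sqrt{\bar{L}^i}(h)$ where $h := \sqrt{\bar{L}^i}(K(\cdot,y))$, I would decompose $h = h_\parallel + h_\perp$ along $\mathcal{N}(\sqrt{\bar{L}^i}) \oplus \mathcal{N}(\sqrt{\bar{L}^i})^\perp$; only $h_\perp$ contributes, and $\sqrt{\bar{L}^i}(h_\perp) = \sqrt{\bar{L}^i}(h) = K^i(\cdot,y)$. Then
\begin{align*}
\langle f, K^i(\cdot,y)\rangle_{\bar{H}^i} = \langle g, h_\perp \rangle_H = \langle g, h \rangle_H = \langle g, \sqrt{\bar{L}^i}(K(\cdot,y)) \rangle_H = \langle \sqrt{\bar{L}^i}(g), K(\cdot,y) \rangle_H = \langle f, K(\cdot,y) \rangle_H = f(y),
\end{align*}
using symmetry of $\sqrt{\bar{L}^i}$ and the reproducing property of $K$ in $H$ from Theorem \ref{Theorem 1}. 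Hence $\bar{H}^i$ is an RKHS with kernel $K^i$, and by uniqueness of the reproducing kernel this forces $\bar{H}^i = H^i$ as function spaces with matching norms, giving the retrieval claim.

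The step I expect to be the main obstacle is the careful handling of the orthogonal complement of the kernel of $\sqrt{\bar{L}^i}$: one must argue that $\mathcal{N}(\sqrt{\bar{L}^i}) = \mathcal{N}(\bar{L}^i)$ (true for symmetric positive operators, but worth stating), that the decomposition $H = \mathcal{N}(\sqrt{\bar{L}^i}) \oplus \mathcal{N}(\sqrt{\bar{L}^i})^\perp$ is valid (immediate here since $H$ is finite-dimensional, so the null space is closed), and that replacing an arbitrary preimage $g$ of $f$ by its component in $\mathcal{N}(\sqrt{\bar{L}^i})^\perp$ changes neither $\sqrt{\bar{L}^i}(g)$ nor the value $\langle g, h_\perp\rangle_H$ used above. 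A secondary subtlety is confirming that the pulled-back inner product is independent of the choice of representative; this again reduces to injectivity on the complement. Since everything is finite-dimensional, no functional-analytic machinery beyond the spectral decomposition already invoked in Theorem \ref{Theorem 2} is needed, and the computation is essentially the one displayed above.
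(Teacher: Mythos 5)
Your proposal is correct and follows essentially the same route as the paper: identify $\bar{H}^{i}$ with $\mathcal{N}\big(\sqrt{\bar{L}^i}\big)^{\perp}$, pull back the inner product of $H$ through $\sqrt{\bar{L}^i}$, get $K^{i}(\cdot,y)\in\bar{H}^{i}$ from $\bar{L}^{i}(K(\cdot,y))=K^{i}(\cdot,y)$, and derive the reproducing property from the symmetry of $\sqrt{\bar{L}^i}$ together with the reproducing property of $K$ in $H$. The only cosmetic difference is that you verify the reproducing property directly for an arbitrary $f=\sqrt{\bar{L}^i}(g)$ by decomposing the preimage of $K^{i}(\cdot,y)$ along $\mathcal{N}\big(\sqrt{\bar{L}^i}\big)\oplus\mathcal{N}\big(\sqrt{\bar{L}^i}\big)^{\perp}$, whereas the paper checks it on the eigenvectors of $\bar{L}^{i}$ with nonzero eigenvalues and extends by linearity; your explicit attention to well-definedness of the pulled-back inner product is a small but welcome addition.
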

\begin{proof}
Since $\mathcal{N}\big(\sqrt{\bar{L}^i}\big)$ is closed subspace of $H$, $H = \mathcal{M}^{i} \oplus \mathcal{N}\big(\sqrt{\bar{L}^i}\big)$, where $\mathcal{M}^{i} = \mathcal{N}\big(\sqrt{\bar{L}^i}\big)^{\perp}$. $\sqrt{\bar{L}^i}()$ maps one-one from $\mathcal{M}^{i}$ to $\bar{H}^{i}$. Hence, $\mathcal{M}^{i} \subset \bar{H}^{i}$. Let $f \in \mathcal{N}\big(\sqrt{\bar{L}^i}\big)$ and $g \in H$. Then, $\langle f,  \sqrt{\bar{L}^i}(g) \rangle_{H} = \langle \sqrt{\bar{L}^i}(f),  g \rangle_{H} = 0$, i.e., $\sqrt{\bar{L}^i}(g) \perp f, \; \forall f \in \mathcal{N}\big(\sqrt{\bar{L}^i}\big)$, $\forall g \in H$. Hence, $\bar{H}^{i} \subset \mathcal{N}\big(\sqrt{\bar{L}^i}\big)^{\perp} = \mathcal{M}^{i}$. Therefore, $M^{i} = \bar{H}^{i}$. The inner product on $\bar{H}^{i}$ is defined as,
\begin{align*}
\langle f, g \rangle_{\bar{H}^{i}} =  \langle \sqrt{\bar{L}^i}(\bar{f}), \sqrt{\bar{L}^i}(\bar{g}) \rangle_{\bar{H}^{i}} = \langle \Pi_{\mathcal{M}^i} (\bar{f}), \Pi_{\mathcal{M}^i} (\bar{g}) \rangle_{H},
\end{align*}
where $f = \sqrt{\bar{L}^i}(\bar{f}), \bar{f} \in H $ and $g = \sqrt{\bar{L}^i}(\bar{g}), \bar{g} \in H$. For $f(\cdot) = K(\cdot,y)$, $\bar{L}^{i}(f)(\cdot) = K^{i}(\cdot,y)$. Hence, $K^{i}(\cdot, y) \in  \mathcal{R}(\bar{L}^i) \subseteq \bar{H}^{i},\; \forall y \in \mathcal{X}$. Let $\{\psi^{i}_j\}$ be the eigenvectors of $\bar{L}^{i}$ with corresponding eigenvalues, $\{\lambda^{i}_{j}\}$. Then, from the construction of $\sqrt{\bar{L}^i}$ in Theorem  \ref{Theorem 2}, $\{\psi^{i}_j\}$ are the eigenvectors of $\sqrt{\bar{L}^{i}}$ with corresponding eigenvalues, $\{\sqrt{\lambda^{i}_{j}}\}$. For any eigenvector, $\psi^{i}_j$, with $\lambda^i_j \neq 0$,
\begin{align*}
&\langle \psi^{i}_{j}(\cdot), K^{i}(\cdot,y) \rangle_{\bar{H}^{i}} = \\
&\langle \sqrt{\bar{L}^i}\Big(\frac{1}{\sqrt{\lambda^{i}_{j}}}\psi^{i}_{j}(\cdot) \Big),\sqrt{\bar{L}^i} \big( \sqrt{\bar{L}^i}(K(\cdot,y)) \big) \rangle_{\bar{H}^{i}} \\
&\overset{(a)}{=}\langle \frac{1}{\sqrt{\lambda^{i}_{j}}}\psi^{i}_{j}(\cdot), \Pi_{\mathcal{M}^i}(\sqrt{\bar{L}^i}(K(\cdot,y))) \rangle_{H} ,\overset{(b)}{=} \langle  \frac{1}{\sqrt{\lambda^{i}_{j}}}\psi^{i}_{j}(\cdot),\\
&\sqrt{\bar{L}^i}(K(\cdot,y))   \rangle_{H} \overset{(c)}{=} \langle \sqrt{\bar{L}^i}( \frac{1}{\sqrt{\lambda^{i}_{j}}}\psi^{i}_{j}(\cdot)), K(\cdot,y)\rangle_{H} \\
&=  \langle \psi^{i}_j(\cdot), K(\cdot,y)\rangle_{H} = \psi^{i}_j(y) ,
\end{align*}
Thus, the reproducing property is satisfied by $\psi^{i}_j(\cdot)$. Since every function in $\mathcal{M}^{i} = \bar{H}^{i}$ can expressed as unique linear combination of $\{\psi^{i}_{j}\}$, and by the linearity of inner product it follows that $\langle f(\cdot), K^{i}(\cdot,y) \rangle_{\bar{H}^{i}} = f(y), \; \forall f \in \bar{H^i}$. The reasoning for the equalities are as follows, $(a)$ by the definition of the inner product and $\psi^{i}_{j} \in \mathcal{M}^{i}$ as $\lambda^{i}_{j} \neq 0$, $(b) \;$ $\sqrt{\bar{L}^i}(K^{i}(\cdot,y)) \in \bar{H}^{i} =\mathcal{M}^{i}$, and $(c)$ symmetry of $\sqrt{\bar{L}^i}$. From Definition \ref{definition 2}, it follows that $\bar{H}^{i}$ is a RKHS with kernel $K^{i}(\cdot,\cdot)$. 
\end{proof}
\begin{corollary}
The downloading operator from the fusion space $H$ to  agent $i$'s knowledge space, $H^i$, is $\sqrt{\bar{L}^{i}} \circ \Pi_{\mathcal{M}^i}$. The downloading operator is linear and bounded.
\end{corollary}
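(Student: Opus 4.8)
The plan is to obtain the corollary as an essentially mechanical consequence of Lemma~\ref{Lemma 2}, Theorem~\ref{Theorem 2}, and Theorem~\ref{Theorem 3}, since the downloading operator, which I will write $D^{i} = \sqrt{\bar{L}^{i}}\circ\Pi_{\mathcal{M}^{i}}$, is built entirely from maps whose relevant properties are already in hand. First I would note that $\Pi_{\mathcal{M}^{i}}$ is the orthogonal projection of the finite-dimensional (hence Hilbert) space $H$ onto the closed subspace $\mathcal{M}^{i} = \mathcal{N}\big(\sqrt{\bar{L}^{i}}\big)^{\perp}$, so it is linear and bounded with $||\Pi_{\mathcal{M}^{i}}|| \le 1$. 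By Lemma~\ref{Lemma 2}, $\bar{L}^{i}$ is linear, symmetric, positive and bounded with $||\bar{L}^{i}|| \le 1$; hence Theorem~\ref{Theorem 2} applies and yields that $\sqrt{\bar{L}^{i}}$ exists and is linear, symmetric and bounded, and (because $\bar{L}^{1}+\bar{L}^{2}=\mathbb{I}$ forces the eigenvalues of $\bar{L}^{i}$ into $[0,1]$) satisfies $||\sqrt{\bar{L}^{i}}|| \le 1$. Linearity of $D^{i}$ is then immediate, being the composition of two linear maps.

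Next I would check that $D^{i}$ really maps into the local knowledge space $H^{i}$, as the statement claims. Theorem~\ref{Theorem 3} shows that the image of $\mathcal{M}^{i}$ under $\sqrt{\bar{L}^{i}}$ is $\bar{H}^{i}$, and that $\bar{H}^{i}$ is an RKHS with kernel $K^{i}$. Since Section~\ref{section 2} established that $H^{i}$ is itself an RKHS with the same kernel $K^{i}$, the uniqueness of the RKHS associated with a positive definite kernel (Moore--Aronszajn) gives $\bar{H}^{i} = H^{i}$, norms included. Therefore, for every $f \in H$, $D^{i}(f) = \sqrt{\bar{L}^{i}}\big(\Pi_{\mathcal{M}^{i}}f\big) \in \bar{H}^{i} = H^{i}$, so $D^{i}: H \to H^{i}$ is well defined.

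For boundedness I would use the isometry statement of Theorem~\ref{Theorem 3}: given $f \in H$, set $g = \Pi_{\mathcal{M}^{i}}f \in \mathcal{M}^{i}$; then $||D^{i}(f)||_{H^{i}} = ||\sqrt{\bar{L}^{i}}(g)||_{\bar{H}^{i}} = ||g||_{H} = ||\Pi_{\mathcal{M}^{i}}f||_{H} \le ||f||_{H}$, so $D^{i}$ is bounded with $||D^{i}|| \le 1$. This finishes the argument.

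The proof is short precisely because the substantive work has already been done in Theorems~\ref{Theorem 2} and~\ref{Theorem 3}; the only point that calls for a moment of attention is the identification $\bar{H}^{i} = H^{i}$ --- that the norm transported onto $\bar{H}^{i}$ through the isometry from $H$ coincides with the feature-coefficient norm originally placed on $H^{i}$. Since this is exactly what uniqueness of the RKHS of the kernel $K^{i}$ guarantees, I do not expect a genuine obstacle, only the bookkeeping of which inner product lives on which space.
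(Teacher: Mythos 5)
Your proposal is correct and follows exactly the route the paper intends: the corollary is stated without proof as an immediate consequence of Lemma~\ref{Lemma 2}, Theorem~\ref{Theorem 2}, and Theorem~\ref{Theorem 3}, and your argument simply spells out that reasoning (linearity and boundedness of $\Pi_{\mathcal{M}^i}$ and $\sqrt{\bar{L}^i}$, the isometry onto $\bar{H}^i$, and the identification $\bar{H}^i = H^i$ via uniqueness of the RKHS with kernel $K^i$). The only detail you make explicit that the paper leaves implicit is the Moore--Aronszajn identification of $\bar{H}^i$ with $H^i$, which is indeed the right justification for viewing the codomain as $H^i$.
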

\section{Regression and Fusion Problem}\label{section 3}
The knowledge spaces constructed in the previous section can used to formulate many inference problems including regression, classification, etc. In this section, we consider the least squares regression problem and the fusion problem as an optimization problem. 
\subsection{Regression at the Agents}\label{subsection 3.1}
Given (input-output) information pairs, $\{(x^i_j, y^i_j)\}^{m}_{j=1}$, to agent $i$, the objective of the agent is to estimate a mapping from input to output. The estimation problem formulated as least squares regression problem is 
\begin{align*}
\underset{f \in H^{i}} \min \sum^{m}_{j=1}(y^i_j-f(x^i_j))^2 + \varrho^{i}|| f ||^{2}_{H^i}.
\end{align*}
Let, $\mathbf{K^{i}} = (K^{i}(x^{i}_{j}, x^{i}_{k}))_{jk}= (\langle K^{i}(\cdot, x^i_k), K^i(\cdot, x^i_{j})  \rangle_{H^i})$, be the Gram matrix corresponding to agent $i$ as defined in the beginning of Section \ref{section 2}. It is well known from the representer theorem (refer \cite{hofmann2008kernel}), that the solution for the above problem is given by $f^i(\cdot) = \sum^{m}_{l=1}\alpha^{i}_{l}K^{i}(\cdot,x^{i}_l)$ where $\boldsymbol{\alpha^{i}}=(\mathbf{K^i}^T\mathbf{K^i} + \varrho^i \mathbf{K^i})^{-1}\mathbf{K^i}^{T}\mathbf{y^i}$ and $\mathbf{y^i} = (y^i_{1},\ldots, y^i_{m})$.  
\subsection{Function Fusion Problem}\label{subsection 3.2}
The functions estimated by the agents are transmitted to the fusion center where the following fusion problem is considered. As presented in \cite{raghavan2023distributed}, we consider $\{\mathfrak{b} = \{b_{k}\}_{k\geq 1} \subset H \}$ which span $H$ to define a dissimilarity measure between $f^{1}$ and $f^2$ as:
\begin{align*}
d_{\mathfrak{b}}(f, g)  =\sum_{k}\langle f - g, b_k \rangle^2_{H}.
\end{align*}
The fusion problem is to find a linear combination of $f^{1}$ and $f^2$,  $f^*$,  such that the dissimilarity between $f^1,  f^*$ and $f^2, f^*$ is minimized.  The fusion problem as an optimization problem is 
\begin{align*}
 \; \underset{a,b\in \mathbb{R}}\min d_{\mathfrak{b}}(af^1+bf^{2},  f^{1}) + d_{\mathfrak{b}}(af^1+bf^{2},  f^{2}) + \\
\varrho|| af^1+bf^{2}||^{2}_{H}.
\end{align*}
The fused function is considered as the function estimated by the system. It is downloaded by the agents to compare (in the sense of norm) against their own estimates.
\section{Example}\label{section 4}
In this section, we demonstrate the application of the theory developed in the previous sections. We consider the estimation of a real valued cubic polynomial with real valued inputs. The coefficients of the polynomial where chosen at random. Agent 1 was provided with $20$ samples of input-output data, where the input was restricted to the set $[-5,5]$. The inputs were uniformly spaced on the interval $[-5,5]$  and the corresponding outputs were obtained by providing the inputs to the true function. The features considered by Agent 1 were, $\varphi^{1}_{1}(x)=1, \varphi^{1}_{2}(x)= x$, and $\varphi^{1}_{2}(x)= x^2$, which implies that its kernel was $K^{1}(x,y) = 1+xy+x^2y^2$. Agent 2 was also provided with $20$ samples of input-output data where the input data was uniformly spaced and restricted to $[-10,-5] \cup [5,10]$, while the features considered by it where $\varphi^{2}_{1}(x)=x^2$ and $\varphi^{2}_{2}(x)= x^3$. Hence, $K^{2}(x,y) = x^2y^2 + x^3y^3$. 

With this set up, the regression problems (subsection \ref{subsection 3.1}) were solved by the agents. These functions were uploaded to the fusion center. Function uploaded by Agent 1 and Agent 2 are plotted in  Figure \ref{Figure 2} and Figure \ref{Figure 3} respectively. At the fusion center, the set $\mathfrak{b} = \{ K(\cdot,\bar{x}_{j})\}^{40}_{j =1}$ was considered, where $\{\bar{x}_{j}\}^{40}_{j=1}$ were randomly sampled from $[-10,10]$. The function fusion problem (subsection \ref{subsection 3.2}) was solved. The fused function is plotted in Figure \ref{Figure 4}.
 
To download the fused function onto the individual KSs, we demonstrate the procedure outlined in subsection \ref{subsection 2.3}. Suppose we choose the set of basis vectors for the space $H$ as $\varphi_1(x)=1, \varphi_2(x) = x, \varphi_3(x) = \sqrt{2}x^2 ,\varphi_4(x) =x^3$, then the kernel generated is $K(x,y) = 1 + xy + 2x^2y^2 +x^3y^3 = K^{1}(x,y) + K^{2}(x,y)$. With these basis vectors, the coefficients for $K^{1}(\cdot,y)$ are $[1,y, \frac{y^2}{\sqrt{2}}, 0]$, and for $K^{2}(\cdot,y)$ are $[0,0, \frac{y^2}{\sqrt{2}}, y^3]$. Thus, the matrix representation for $\bar{L}^{1}$ is obtained as follows:
\begin{align*}
&\bar{L}^{1}(\varphi_1)(y)  \hspace{-2pt} = \hspace{-2pt} \langle \varphi_1(\cdot), K^{1}(\cdot,y) \rangle_{H} \hspace{-2pt} =\hspace{-3pt} \langle [1,0,0,0], [1,y, \frac{y^2}{\sqrt{2}}, 0] \rangle_{\mathbb{R}^4}\\
&=1. \;\hspace{-2pt}\bar{L}^{1}(\varphi_2)(y) \hspace{-2pt}  = \hspace{-2pt} \langle [0,1,0,0], [1,y, \frac{y^2}{\sqrt{2}}, 0] \rangle_{\mathbb{R}^4} \hspace{-2pt} = \hspace{-2pt}y.\; \hspace{-2pt}\bar{L}^{1}(\varphi_3)(y)\\
&= \langle [0,1,0,0], [1,y, \frac{y^2}{\sqrt{2}}, 0] \rangle_{\mathbb{R}^4} =\frac{y^2}{\sqrt{2}}.\; \bar{L}^{1}(\varphi_4)(y) = 0.\\
&L^{1}_{M}= 
\begin{bmatrix}
1 & 0 & 0 & 0\\
0 & 1 & 0 & 0 \\
0 & 0 & \frac{1}{2} & 0 \\
0 & 0 & 0 & 0 
\end{bmatrix}, \; 
\sqrt{L^{1}_{M}}= 
\begin{bmatrix}
1 & 0 & 0 & 0\\
0 & 1 & 0 & 0 \\
0 & 0 & \frac{1}{\sqrt{2}} & 0 \\
0 & 0 & 0 & 0 
\end{bmatrix},\\
&L^{2}_{M}= 
\begin{bmatrix}
0 & 0 & 0 & 0\\
0 & 0 & 0 & 0 \\
0 & 0 & \frac{1}{2} & 0 \\
0 & 0 & 0 & 1
\end{bmatrix}, \; 
\sqrt{L^{2}_{M}}= 
\begin{bmatrix}
0 & 0 & 0 & 0\\
0 & 0 & 0 & 0 \\
0 & 0 & \frac{1}{\sqrt{2}} & 0 \\
0 & 0 & 0 & 1 
\end{bmatrix}.
\end{align*}
\begin{figure}
\begin{center}
\includegraphics[width=\columnwidth]{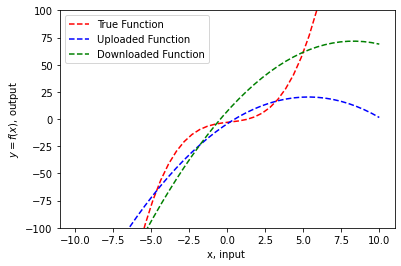}
\caption{True function, function uploaded by Agent 1, function downloaded by Agent 1.} 
\label{Figure 2}
\end{center}
\vspace{-0.7cm}
\end{figure}
\begin{figure}
\begin{center}
\includegraphics[width=\columnwidth]{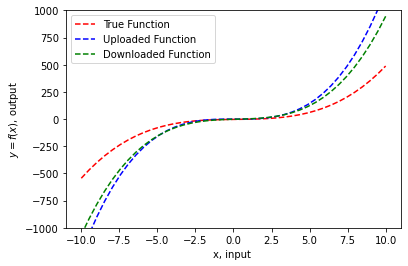}
\caption{True function, function uploaded by Agent 2, function downloaded by Agent 2.} 
\label{Figure 3}
\end{center}
\vspace{-0.7cm}
\end{figure}
\begin{figure}
\begin{center}
\includegraphics[width=\columnwidth]{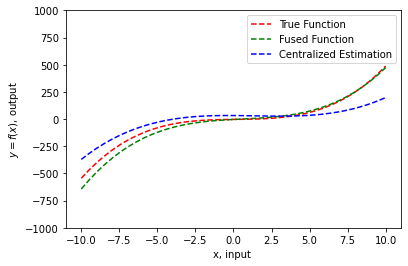}
\caption{True function, fused function, function obtained through centralized estimation at fusion center.} 
\label{Figure 4}
\end{center}
\vspace{-0.8cm}
\end{figure} 
The coefficients of the fused function with respect to basis chosen for $H$ were obtained. The coefficients for downloaded functions were obtained through the operation $\boldsymbol{\alpha}^{i} = \sqrt{L^{i}_{M}}\boldsymbol{\alpha}$ where $\boldsymbol{\alpha}$ is the vector of coefficients corresponding to the fused function. The downloaded functions correspond to $\sum^{4}_{j=1}\alpha^{i}_{j}\varphi_j$, where $\boldsymbol{\alpha}^{i} = [\alpha^{i}_{1},\alpha^{i}_{2},\alpha^{i}_{3},\alpha^{i}_{4} ]$. The function downloaded by Agent 1 and Agent 2  are plotted in Figures \ref{Figure 2} and \ref{Figure 3} respectively. 

We observe that both the uploaded and downloaded function for Agent 1 do not estimate the true function well as it is missing the $\varphi_{4}(x)=x^{3}$ feature. Agent $2$ is able to ``better" estimate the function however the impact of missing  data points ( between $[-5,5]$ ) and missing features is visible. However, we note that the agents are able to exchange knowledge as the downloaded functions are ``significantly" better than the uploaded functions. To compare the performance of the fusion procedure, we compare it against the centralized estimation method. In this method, the data collected by both agents is sent to the fusion center, where a regression problem (refer subsection \ref{subsection 3.1}) is solved using the collective data.  The function estimated using this method is plotted in Figure \ref{Figure 4}. 

We observe that the fused function is the best estimate among all estimates considered so far. The centralized scheme is where the data collected by both the agents is simultaneously processed in the fusion space. In this particular example, it is possible that the joint processing of the data using both the kernels led to ``poor" computations as compared to local estimation using individual kernels and averaging of the local estimates in the fusion space. Further theoretical analysis is needed to compare the fusion method and the centralized  estimation method.  
\section{Conclusion and Future Work}\label{section 5}
We considered the problem of function estimation by two agents given local data and different set of features. We presented the construction of suitable spaces for the estimation problem and fusion problems to be studied. We derived operators to transform functions across spaces coherently.  A distributed estimation scheme without exchange of data was presented to solve the problem considered. As future work, we are interested in: (i) developing a sequential collaborative learning scheme involving the agents and the fusion center; (ii) studying the consistency properties of such a scheme in the local KS and the fusion space; (iii) quantifying the transfer of knowledge from one agent to another. 
\bibliographystyle{IEEEtran}
\bibliography{biblio}
\end{document}